\newtheorem{lem}{Lemma}
\newtheorem{theorem}{Theorem}
\newtheorem{defn}{Definition}
\newtheorem{prop}{Proposition}
\def\mb{\mathbf}
\def\mbb{\mathbb}
\def\mc{\mathcal}
\begin{document}
\title{Measurement partitioning and observational equivalence in state estimation}
\author{Mohammadreza Doostmohammadian$^\dagger$, \emph{Student Member, IEEE}, and Usman A. Khan$^\dagger$, \emph{Senior Member, IEEE}
\thanks{
$^\dagger$ The authors are with the Department of Electrical and Computer Engineering, Tufts University, 161 College Ave., Medford, MA 02155, {\texttt{\{mrd,khan\}@ece.tufts.edu}}. This work has been partially supported by an NSF Career Award \# CCF-1350264.}}
\maketitle

\begin{abstract}
This letter studies measurement partitioning and equivalence in state estimation based on graph-theoretic principles. We show that a set of critical measurements (required to ensure LTI state-space observability) can be further partitioned into two types:~$\alpha$ and~$\beta$. This partitioning is driven by different graphical (or algebraic) methods used to define the corresponding measurements. Subsequently, we describe observational equivalence, i.e. given an~$\alpha$ (or~$\beta$) measurement, say~$y_i$, what is the set of measurements equivalent to~$y_i$, such that only one measurement in this set is required to ensure observability? Since~$\alpha$ and~$\beta$ measurements are cast using different algebraic and graphical characteristics, their equivalence sets are also derived using different algebraic and graph-theoretic principles. We illustrate the related concepts on an appropriate system digraph.   

\textit{Keywords:} Observability, Structured systems theory, Dulmage-Mendelsohn decomposition, Observational equivalence
\end{abstract}

\section{Introduction}
The concept of observability is a determining factor in state estimation. In linear but static cases, observability defines the solvablity of the set of~$p$ measurement equations to recover an~$n$-dimensional parameter, subsequently requiring at least as many measurements as the number of unknowns,~$p\geq n$, in general. Observability in LTI dynamics is more interesting since the number,~$p$, of measurements may be less than the number,~$n$, of states. Simply, an observable system possesses enough dependencies among the states, via the system matrix, that can be exploited towards state estimation with a smaller number of measurements. There are different approaches to check for observability of LTI systems: (i) algebraic method of finding the rank of the observability Gramian~\cite{luenberger1979introduction, bay}; (ii) the Popov-Belevitch-Hautus (PBH) test~\cite{hautus}; and, (iii) graph-theoretic analysis of the system digraph~\cite{pichai1984graph, rein_book, sauter:09,woude:03,liu-pnas}.

In LTI state-space observability, a significant question is to find a set of critical measurements to ensure observability. Recent literature~\cite{liu-pnas,boukhobza-recovery,asilomar11, commault-recovery} discusses different aspects and approaches towards this problem; see our prior work~\cite{icassp13} for rank-deficient systems. In these works, the LTI systems are modeled as digraphs and graph-theoretic algorithms are adapted to find the corresponding critical measurements. Since these results are structural (only depend on the non-zeros of the system matrices), they ensure what is referred to as \emph{generic observability}, i.e. the underlying LTI systems are observable for almost all choices of non-zeros in the corresponding matrices. The values for which the results do not hold lie on an algebraic variety with zero Lebesgue measure~\cite{woude:03}. 

In this letter, we first show that the set of critical measurements can be further partitioned into two types. Each of this partition is different than the other in both \emph{algebraic} and \emph{graph-theoretic} sense. We explicitly capture this algebraic and graph-theoretic characterizations. Algebraically, these partitions belong to different methods of recovering the rank of the observability Gramian. Graph-theoretically, these partitions are related to the Strongly Connected Components (SCC) and contractions in the system digraphs \cite{asilomar11,commault-recovery,icassp13,jstsp14}. We then introduce the notion of observational equivalence in state estimation. In particular, we derive a set of alternatives for each critical measurement such that if any critical measurement is not available then an equivalent measurement can be chosen to recover the loss of system observability. Clearly, since the measurements may have different algebraic and graph-theoretic characteristics, their equivalence sets are also derived using different algebraic and graph-theoretic principles.

We now describe the rest of the letter. Section~\ref{prelim} covers the preliminaries on LTI state-space descriptions and the notion of generic observability. Using this setup, we provide the problem formulation in Section~\ref{pf}. The main results on measurement partitioning are derived in Section~\ref{part}, while  observational equivalence is characterized in Section~\ref{eqv}. Finally, an illustrative example and concluding remarks are given in Sections~\ref{ex} and~\ref{con}, respectively.

\section{Preliminaries}\label{prelim}
Consider the LTI state-space dynamics as follows:
\begin{eqnarray}\label{sys1}
\mb{x}_{k+1} = A\mb{x}_k + \mb{v}_k,&\qquad& \mb{y}_k=H\mb{x}_k +\mb{r}_k,\\\label{sysCT}
\dot{\mb{x}} = A\mb{x} + \mb{v},&\qquad& \mb{y}=H\mb{x} +\mb{r},
\end{eqnarray}
where the former is the discrete-time description and the latter is the continuous-time description. Since observability in either case is identical and depends on the system matrix,~$A$, and the measurement matrix,~$H$, the treatment in this letter is applicable to both cases. Using the standard terminology,~$\mb{x}=[x_{1}~\ldots~x_{n}]^\top\in\mbb{R}^n$ is the state-space,~$\mb{y}=[y_1,\ldots,y_p]\in\mbb{R}^p$ is the measurement vector; the noise variables,~$\mb{v}$ and~$\mb{r}$, have appropriate dimensions with the standard assumptions on Gaussianity and independence. The discrete-time description is indexed by~$k\geq0$. It is well-known that the LTI descriptions above, Eqs.~\eqref{sys1} and~\eqref{sysCT}, lead to a bounded estimation error if and only if the observability Gramian, 
\begin{eqnarray}\label{pbh}
\mc{O}(A,H) = 
\left[
\begin{array}{cccc}
H^\top & (HA)^\top & \ldots & (HA^{n-1})^\top
\end{array}
\right]^\top
\end{eqnarray}
is full-rank~\cite{bay}. 

For estimation purposes, the system must be observable ($\mbox{rank}(\mc{O})=n$) with the given set of measurements. The term observability is a quantitative measure defining the ability to estimate the entire state vector,~$\mb{x}$, with bounded estimation error. Algebraic tests for observability check the rank of the Gramian,~$\mc{O}$, or the invertability of~$\mathcal{O}^\top \mathcal{O}$,~\cite{kalman:61, bay}. The PBH test~\cite{hautus}, on the other hand, is a symbolic method to test for system observability. This method checks if the matrix,
\begin{eqnarray}
\left[
\begin{array}{c}
A-sI \\
H
\end{array}
\right],
\end{eqnarray} 
is full-rank for all values of~$s \in \mathbb{C}$ where~$I$ is the~$n\times n$ identity matrix. The matrix,~$A-sI$, is full rank for all (probably complex) values of~$s$, except for the eigenvalues of~$A$. This simply implies that the PBH test has to be checked \textit{only} for these values. In other words, an LTI system is \textit{not} observable if and only if there exist a right eigenvector of~$A$ in the null space of measurement matrix,~$H$, i.e.~$\exists \mb{w} \in \mathbb{R}^{n}$ such that~$A\mb{w}=\lambda \mb{w}$ and~$H\mb{w}=0$. Both these approaches rely on the knowledge of exact values of system and measurement matrices, i.e. the numerical values of all of the elements in~$A$ and~$H$. However, it is often the case that the sparsity (zeros and non-zeros) of the system matrix is fixed while the non-zero elements change; for example, because of linearization of system parameters on different operating points \cite{Liu-nature}. For such cases, along with the computational complexity and numerical inaccuracies in computing matrix ranks, the numerical methods may not be feasible to test for observability. 

\subsection{Graph-theoretic Observability}\label{obsrv}
Instead of using the algebraic tests for observability, an alternate is a graph-theoretic approach that is described on the \emph{system digraph} as follows. Let~$\mc{X}=\{x_1,\ldots,x_n\}$ and~$\mc{Y}=\{y_1,\ldots,y_p\}$ denote the set of states and measurements, respectively. The system digraph is a \emph{directed graph} defined as~$\mc{G}_{\scriptsize \mbox{sys}} = (\mc{V}_{\scriptsize \mbox{sys}},\mc{E}_{\scriptsize \mbox{sys}})$, where~$\mc{V}_{\scriptsize \mbox{sys}}=\mc{X} \cup \mc{Y}$ is the set of nodes and~$\mc{E}_{\scriptsize \mbox{sys}}$ is the set of edges; this digraph is induced by the structure of the system and measurement matrices,~$A=\{a_{ij}\}$, and~$H=\{h_{ij}\}$. An edge,~$x_j {\rightarrow} x_i$, in~$\mc{E}_{\scriptsize \mbox{sys}}$ exists from~$x_j$ to~$x_i$ if~$a_{ij}\neq0$. Similarly, an  edge,~$x_j {\rightarrow} y_i$, in~$\mc{E}_{\scriptsize \mbox{sys}}$ exists from~$x_j$ to~$y_i$ if~$h_{ij}\neq0$. A \emph{path} from~$x_j$ to~$x_i$ (or~$y_i$) is denoted as~$x_j\overset{\scriptsize\mbox{path}}{\longrightarrow} x_i$. A path is called \emph{$\mc{Y}$-connected} (denoted by~$\overset{\scriptsize\mbox{path}}{\longrightarrow} \mc{Y}$) if it ends in a measurement. A \emph{cycle} is a path where the begin and end nodes are the same. A cycle family is a set of cycles which are mutually disjoint, i.e. they don't share any nodes. Similarly, a path and a cycle are disjoint if they do not share any node. More details on this construction can be found in~\cite{woude:03,godsil}.

Graph-theoretic (also known as generic) observability is based on structured systems theory. This method only relies on the system structure (zeros and non-zeros) and is valid for \textit{almost all} numerical values of the system parameters; the values of the system matrices for which generic observability does not hold lie on an algebraic variety of zero Lebesgue measure, see~\cite{woude:03} and references therein. We now provide the main result on generic observability--dual of the generic controllability result in~\cite{lin}.
\begin{theorem}\label{woude_thm}
A system is generically~$(A,H)$-observable if and only if in its digraph:
\begin{enumerate}[(i)]

\item Every state~$x_i$ is the begin node of a~$\mathcal{Y}$-connected path, i.e.~$x_i \overset{\scriptsize\mbox{path}}{\longrightarrow} \mc{Y}, \forall i \in \{1,\ldots,n\}$;

\item There exist a family of \textit{disjoint}~$\mathcal{Y}$-connected paths and cycles that includes all states.
\end{enumerate}
\end{theorem}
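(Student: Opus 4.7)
The plan is to derive this result as the exact dual of Lin's classical structural controllability theorem. Observability of $(A,H)$ is equivalent to controllability of the transposed pair $(A^\top, H^\top)$, because the observability Gramian in~\eqref{pbh} is precisely the transpose of the controllability matrix associated with $(A^\top, H^\top)$. Accordingly, I would introduce the dual digraph $\mc{G}_{\scriptsize \mbox{sys}}^\top$ obtained by reversing every edge of $\mc{G}_{\scriptsize \mbox{sys}}$ and reinterpreting the measurement nodes $\mc{Y}$ as input nodes. Under this reversal, any $\mc{Y}$-connected path starting at $x_i$ in $\mc{G}_{\scriptsize \mbox{sys}}$ corresponds to a path from an input node to $x_i$ in $\mc{G}_{\scriptsize \mbox{sys}}^\top$, while cycles map to cycles.

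Next I would invoke Lin's theorem on $\mc{G}_{\scriptsize \mbox{sys}}^\top$: the pair $(A^\top, H^\top)$ is generically controllable iff (a) every state is reachable from some input node in $\mc{G}_{\scriptsize \mbox{sys}}^\top$, and (b) there is a spanning disjoint union of input-rooted stems and cycles (equivalently, the digraph contains no dilation). Translating these conditions back by reversing all edges: (a) turns into condition~(i), since reachability from an input in the reversed graph is the same as having a $\mc{Y}$-connected path in the original; and an input-rooted stem in $\mc{G}_{\scriptsize \mbox{sys}}^\top$ becomes a $\mc{Y}$-connected path in $\mc{G}_{\scriptsize \mbox{sys}}$, so the cactus decomposition in (b) yields exactly the disjoint family required by condition~(ii).

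For a self-contained argument bypassing Lin, I would work directly with the PBH test. Necessity of (i) is immediate: if some $x_i$ has no $\mc{Y}$-connected path, the set of descendants of $x_i$ in $\mc{G}_{\scriptsize \mbox{sys}}$ is $A$-invariant in an appropriate sense and annihilated by $H$, so for generic nonzero entries one obtains a right eigenvector of $A$ in the null space of $H$. Necessity of (ii) follows from a bipartite matching / Dulmage--Mendelsohn argument on the structural matrix formed by stacking $A-sI$ over $H$: violation of (ii) forces structural rank strictly less than $n$ at $s=0$ (or at some eigenvalue of $A$), hence unobservability. For sufficiency, an $n\times n$ minor of the PBH pencil can be expanded via the coefficient-of-determinant / cycle-family formula; condition (ii) guarantees at least one nonvanishing term indexed by a disjoint family of $\mc{Y}$-connected paths and cycles, producing a nonzero polynomial in the free parameters of $A$ and $H$, which by standard algebraic-variety arguments vanishes only on a Lebesgue-null set.

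The main obstacle I expect is the sufficiency direction, specifically ensuring that the polynomial identified via the spanning disjoint family of $\mc{Y}$-connected paths and cycles does not cancel against contributions from other families. Competing families may carry opposite signs, so one must either group terms by their combinatorial type so that cancellations cannot occur across types, or invoke the genericity reduction that replaces the nonzero entries of $A$ and $H$ by algebraically independent indeterminates. The remaining translations between the algebraic and graph-theoretic conditions are essentially bookkeeping, making this cancellation/genericity step the technical crux of the proof.
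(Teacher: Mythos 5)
The paper offers no proof of this theorem at all: it is stated as a known result, ``the dual of the generic controllability result in~\cite{lin}'' (see also~\cite{woude:03}), and your primary argument---transposing $(A,H)$ to $(A^\top,H^\top)$, reversing the digraph, and invoking Lin's structural controllability characterization---is precisely that justification, carried out correctly. Your supplementary self-contained sketch via the PBH pencil and the cycle-family expansion of its minors is also sound, and you correctly identify the potential cancellation in the sufficiency direction as the crux, which is resolved exactly as you say: by treating the nonzero entries of $A$ and $H$ as algebraically independent indeterminates so that distinct disjoint path/cycle families contribute distinct monomials.
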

\noindent The first condition is known as \textit{accessibility} and the second as the \textit{S-rank} or \textit{matching} condition. The above conditions, however, are known to have algebraic meanings, see~\cite{shields}. We describe the algebraic connections in the following. 
\begin{prop}\label{ACprop}
Accessibility is tied with the irreducibility of the matrix~$\left[ \begin{array}{cc} A^\top & H^\top \end{array} \right]^\top$. Having an inaccessible node in the system digraph implies the existence of a permutation matrix~$P$ such that,
\begin{eqnarray}
P A P^{-1} = \left[ \begin{array}{c|c}
A_{11}& A_{12} \\\hline
0& A_{22}
\end{array} \right], ~ PH =[0 ~~| ~ H_1].
\end{eqnarray}
\end{prop}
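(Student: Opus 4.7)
The plan is to translate inaccessibility into a block-triangular pattern of $A$ and $H$ via a permutation that groups the inaccessible states together. I would let $\mc{X}_I\subseteq\mc{X}$ denote the set of states that are \emph{not} begin nodes of any $\mc{Y}$-connected path, and set $\mc{X}_A = \mc{X}\setminus \mc{X}_I$; by the hypothesis of the proposition $\mc{X}_I$ is nonempty. The permutation $P$ is then taken to be the one that lists the indices in $\mc{X}_I$ first and those in $\mc{X}_A$ second.

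The central graph-theoretic step is a closure property of $\mc{X}_I$: every outgoing edge from a node in $\mc{X}_I$ must terminate inside $\mc{X}_I$. To see this, suppose $x_j\in\mc{X}_I$ and that there existed an edge $x_j\to v$ with $v\in\mc{X}_A\cup\mc{Y}$. If $v\in\mc{Y}$, this would immediately give a length-one $\mc{Y}$-connected path from $x_j$; and if $v\in\mc{X}_A$, then $v$ admits a $\mc{Y}$-connected path that can be prepended by $x_j\to v$. Either case contradicts $x_j\in\mc{X}_I$, establishing the claim.

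The next step converts this closure statement to matrix entries using the conventions $a_{ij}\neq 0 \Leftrightarrow x_j\to x_i$ and $h_{ij}\neq 0 \Leftrightarrow x_j\to y_i$ recalled in Section~\ref{obsrv}. For each source index $j$ with $x_j\in\mc{X}_I$, the closure property forces $a_{ij}=0$ whenever $x_i\in\mc{X}_A$, and $h_{ij}=0$ for every measurement index $i$. Reading this off in the basis ordered by $P$ yields precisely
\begin{eqnarray*}
PAP^{-1} = \left[\begin{array}{c|c} A_{11} & A_{12} \\\hline 0 & A_{22}\end{array}\right],\qquad PH = [\,0 \mid H_1\,],
\end{eqnarray*}
where $A_{11}$ governs $\mc{X}_I$, $A_{22}$ governs $\mc{X}_A$, and $A_{12}$ captures the admissible edges $\mc{X}_A\to\mc{X}_I$. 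Since admitting such a permutation is exactly what it means for the stacked matrix $[A^\top \; H^\top]^\top$ to be reducible, the irreducibility correspondence asserted in the first sentence of the proposition follows at the same time.

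The main point requiring care, and essentially the only subtle aspect, is the orientation convention between edges and matrix indices: because $a_{ij}$ encodes $x_j\to x_i$ rather than $x_i\to x_j$, the closure-under-outgoing-edges property must be matched to the \emph{column} side of the block decomposition, so that the zero block lands in the lower-left and not the upper-right. Once the row/column labeling is aligned with this convention, the proof reduces to mechanically reading off the zero patterns.
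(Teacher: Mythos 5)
The paper does not actually supply a proof of Proposition~\ref{ACprop}: it is stated as a known fact from structured systems theory, with the algebraic connection attributed to~\cite{shields}. Your argument is a correct, self-contained proof of the stated implication, and it is essentially the canonical one: partition the states into the inaccessible set $\mc{X}_I$ and its complement, show $\mc{X}_I$ is closed under outgoing edges (otherwise an inaccessible state would inherit a $\mc{Y}$-connected walk, hence a path, from a successor), and read the resulting zero pattern off the convention $a_{ij}\neq 0 \Leftrightarrow x_j\to x_i$. Your explicit attention to the row/column orientation is exactly the right place to be careful, and your block placement (zero block in the lower-left, zero columns of $H$ over $\mc{X}_I$) is consistent with listing $\mc{X}_I$ first. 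Two small remarks. First, $PH=[0\mid H_1]$ is dimensionally a \emph{column} permutation of $H$, so it should really read $HP^{-1}=HP^{\top}$; this slip is inherited from the paper's statement, not introduced by you. Second, you fully prove the displayed implication, but the opening sentence of the proposition (accessibility is ``tied with'' irreducibility) is an equivalence; the converse direction --- that such a zero pattern forces every state indexed by the first block to be inaccessible, since any walk out of that block toward $\mc{Y}$ would require a nonzero entry in the forbidden blocks --- is immediate from the same reading of the zero pattern and deserves one explicit sentence if the ``tied with'' claim is to be established in full.
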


\begin{prop}\label{SRprop}
S-rank condition is related to the structural rank of the system, i.e.
\begin{eqnarray}\label{EQ4}
\mbox{S-rank} \left[ \begin{array}{c} A\\ H \end{array} \right] = n.
\end{eqnarray}
\end{prop}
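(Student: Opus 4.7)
The plan is to reduce the structural-rank condition to a bipartite-matching problem and then reinterpret a maximum matching as the required family of disjoint $\mathcal{Y}$-connected paths and cycles. Recall the classical characterization (due to König/Hall, underlying the Dulmage--Mendelsohn decomposition) that the structural rank of any matrix $M$ equals the size of a maximum matching in its bipartite representation $\mc{B}(M)$, whose two vertex classes are the row- and column-indices of $M$ and whose edges encode the non-zero positions. I will apply this to $M=[A^\top~H^\top]^\top$, so that $\mbox{S-rank}(M)=n$ holds iff $\mc{B}(M)$ admits a matching that saturates all $n$ columns.

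Next I would translate $\mc{B}(M)$ back into the system digraph. The $n$ column-vertices of $\mc{B}(M)$ correspond bijectively to the states $\{x_1,\ldots,x_n\}$, while the $n+p$ row-vertices correspond to $\{x_1,\ldots,x_n\}\cup\{y_1,\ldots,y_p\}$; an edge in $\mc{B}(M)$ incident to column $x_j$ and row $x_i$ (resp.\ row $y_k$) corresponds exactly to the digraph arc $x_j\to x_i$ (resp.\ $x_j\to y_k$) of $\mc{G}_{\scriptsize\mbox{sys}}$. Thus a column-saturating matching of size $n$ is precisely a choice of one outgoing arc from each state in the digraph, subject to the rule that no two such arcs share a head.

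The core combinatorial step is to show that such a selection of arcs decomposes into disjoint cycles among the $x_i$'s together with disjoint paths ending in $\mc{Y}$. Every state has out-degree exactly one in the chosen subgraph, and in-degree at most one; measurements have out-degree zero and in-degree at most one. Starting from any state and following the unique outgoing chosen arc, the trajectory cannot revisit a state without closing a cycle (since in-degrees are at most one), and if it does not close a cycle it must terminate at a measurement node (the only sinks available). Hence the chosen arcs partition the state set into a family of disjoint cycles and disjoint $\mc{Y}$-connected paths, which is exactly condition~(ii) of Theorem~\ref{woude_thm}. The converse is immediate: any such family provides one incoming arc per covered vertex other than the path starts and one outgoing arc per state, giving a column-saturating matching of size $n$ in $\mc{B}(M)$.

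The main obstacle is not the algebra but the careful bookkeeping in the matching-to-path/cycle decomposition, in particular verifying that no chosen arc is ``wasted'' and that every state ends up lying on either a cycle or a $\mc{Y}$-connected path. Once the in-degree/out-degree constraints of the matching are in hand, this reduces to a standard functional-graph argument, and the equivalence with \eqref{EQ4} follows.
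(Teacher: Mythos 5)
Your argument is correct and follows the same route the paper itself gestures at: the paper states this proposition without proof, deferring to \cite{shields,van1999generic} and to its own remark that the $S$-rank of $[A^\top~H^\top]^\top$ equals the size of a maximum matching in the bipartite representation. Your write-up simply makes that chain explicit, and the functional-graph step (out-degree exactly one and in-degree at most one on states, measurements as the only sinks, hence a decomposition into disjoint cycles and $\mc{Y}$-connected paths, and conversely) is sound.
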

The definition of structural rank ($S$-rank in short) and its properties can be described as follows. The~$S$-rank (also called \textit{ generic rank}) is the maximal rank of a matrix,~$A$, that can be achieved by changing its non-zero elements. In the system matrix,~$A$,~$S$-rank equals the maximum number of non-zero elements in the distinct rows and columns of~$A$. In the system digraph,~$\mc{G}_{\scriptsize \mbox{sys}}$,~$S$-rank is the size of the \textit{maximum matching}, see~\cite{shields,van1999generic} for details.  

\section{Problem Formulation}\label{pf}
In this letter, we first consider the problem of finding a set of state measurements that is required for observability. We show that each such set can be partitioned into two types of measurements:~$\alpha$ and~$\beta$; these different types of measurements have different algebraic and graph-theoretic interpretations that we characterize. Algebraically, Type-$\alpha$ measurements correspond to the $S$-rank condition in Proposition~\ref{SRprop} and increase the $S$-rank of the matrix in Eq.~\eqref{EQ4}; while Type-$\beta$ measurements are tied to the accessibility condition in Proposition~\ref{ACprop}. Graph-theoretically, Type-$\alpha$ and Type-$\beta$ measurements belong to maximum matching and parent SCCs in the system digraph. We describe these results in Section~\ref{part}.

Clearly, a set of measurements that ensures observability may not be unique motivating to search for all possible sets that ensure observability. In this context, the second problem we consider is to define the states that are \textit{equivalent} in terms of observability--the equivalence relation is denoted by~`$\sim$'. In particular, if two states,~$x_i$ and~$x_j$ are observationally equivalent, i.e.~$x_i \sim x_j$, then measuring any one of them suffices for observability. Hence, the corresponding measurements are also equivalent, i.e.~$y_i\sim y_j$. We characterize this notion of observational equivalence towards state estimation in both algebraic and graph-theoretic sense. The details of this process are provided in Section~\ref{eqv}.

\section{Measurement Partitioning}\label{part}
In this section, we describe the process of measurement partitioning. Given a set of observable measurements, i.e. the matrix, $H$, such that~$(A,H)$ is observable, we partition the measurements into three types:~$\alpha$,~$\beta$, and~$\gamma$. Type-$\alpha$ and Type-$\beta$ are necessary for observability (assuming fixed~$H$) while Type-$\gamma$ measurements are redundant \cite{jstsp}. 
\begin{defn}\label{Odef}
Given system matrices,~$A$ and~$H$, a measurement is necessary for observability if and only if removing that measurement renders the system unobservable. 
\end{defn}
For a given~$H$, let~$H_{\alpha}$,~$H_{\beta}$, and~$H_\gamma$ denote the submatrices of~$H$ that represent each partition,~$\alpha$,~$\beta$, and~$\gamma$, respectively. Similarly, let~$H_{\alpha,\beta}$ denote the submatrix corresponding to both $\alpha$ and $\beta$ measurements. Using this notation the above definition can be summarized in following:
\begin{eqnarray}
\mbox{rank}\left(\mathcal{O}\left(A,\left[ 
\begin{array}{c}
H_{\alpha, \beta}\\
H_{\gamma}
\end{array}\right]
 \right)\right) = \mbox{rank}\left(\mathcal{O}\left(A,
H_{\alpha, \beta}\right)\right) = n.
\end{eqnarray}
In the rest of this section, we characterize the methods to arrive at these partitions. The results we develop are in lieu of Theorem~\ref{woude_thm} and rely on ensuring that each of the two (graph-theoretic) conditions (i) and (ii) are satisfied. It is straightforward to note that the graph-theoretic interpretation is described in Theorem~\ref{woude_thm}, while the algebraic interpretations are characterized in Propositions~\ref{ACprop} and~\ref{SRprop}.
\subsection{Graph-Theoretic}\label{MPGT}
We first consider the Type-$\alpha$ measurements. \emph{Type-$\alpha$} measurements are related to the maximum matching and contractions in the system digraph. Define the maximum matching,~$\mc{M}$, as the largest subset of edges with no common end nodes. Notice that the maximum matching is not unique. This can be best defined over the bipartite representation,~$\Gamma _A = (\mc{V}^+,\mc{V}^-,\mc{E}_{\Gamma_A})$, of system digraph, where~$\mc{V}^+=\mc{X}^+$ is the set of begin nodes and~$\mc{V}^-=\mc{X}^-\cup\mc{Y}^-$ is the set of end nodes, with the edge set defined as~$(v_i^+\in\mc{V}^+,v_j^-\in\mc{V}^-) \in \mc{E}_{\Gamma_A}$ if~$x_j {\rightarrow} x_i$ or~$x_j {\rightarrow} y_i$.
Having the maximum matching,~$\mc{M}$, let~$\delta \mc{M}^+$ represent the set of unmatched nodes defined as the nodes in~$\mc{V}^+$ not incident to the edges in~$\mc{M}$. 
\begin{defn}
A Type-$\alpha$ measurement is the measurement of an unmatched node,~$v_j\in\delta \mc{M}^+$, in the matching,~$\mc{M}$.
\end{defn}

On the other hand, \textit{Type-$\beta$} measurements are related to the Strongly-Connected Components (SCCs) in the system digraph. In a not strongly-connected digraph, define SCCs,~$\mc{S}_i$'s, as the largest strongly-connected sub-graphs. In addition, an SCC is matched, denoted by~$\mc{S}^{\circlearrowleft}_i$, if it contains a family of disjoint cycles covering all its states. A cycle is a simple example of a matched SCC. An SCC is called \textit{parent}, denoted by~$\mc{S}^{p}_i$, if it has no outgoing edges to any other SCC. Any SCC that is not parent is a \textit{child},~$\mc{S}^{c}_i$. In this regard, define \textit{partial order},~$\preceq$, as the existence of edges from one SCC to another. Mathematically, ~$\{\mc{S}_i \preceq \mc{S}_j \}$ if and only if~$v_i\overset{path}{\longrightarrow} v_j$ for some nodes~$\{v_i \in \mc{S}_i , v_j \in \mc{S}_j \}$.  Clearly we have, $\mc{S}^c \preceq \mc{S}^p$.
\begin{defn}
A Type-$\beta$ measurement is the measurement of a state in a matched parent SCC,~$\mc{S}^{\circlearrowleft p}_i$. 
\end{defn}
\subsection{Algebraic} \label{alg_part}
It is straightforward to note that Proposition~\ref{SRprop} may not be satisfied by a measurement that satisfies Proposition~\ref{ACprop}. In other words, a measurement that recovers accessibility may not improve the $S$-rank of $[A^\top~H^\top]^\top$. In this sense, we define a Type-$\alpha$ measurement as the one that improves the $S$-rank of $[A^\top~H^\top]^\top$ by~$1$. Note that when the system matrix has a full $S$-rank, there are no Type-$\alpha$ measurements because $S$-rank$(A)$ is already~$n$ and need not to be improved by the measurements. 
\begin{defn}
In the algebraic sense, the $i$th Type-$\alpha$ measurement,~$\alpha_i$, is formally defined as a measurement satisfying
\begin{eqnarray}\label{Tal_def}
S\mbox{-rank}\left(
\left[
\begin{array}{c}
A \\
 H_{\alpha_i}
\end{array}
\right]
\right) =  S\mbox{-rank}(A)+1,
\end{eqnarray}
where~$H_{\alpha_i}$ is a $1\times n$ row with a non-zero at the $\alpha_i$th location. 
\end{defn}
\noindent Each Type-$\alpha$ agent thus improves the $S$-rank condition (ii) in Theorem~\ref{woude_thm} by exactly~$1$.
\begin{defn}  
The $i$th Type-$\beta$ measurement,~$\beta_i$, does not improve the~$S$-rank, i.e.
\begin{eqnarray}\label{EQ9}
\mbox{S-rank}\left(
\left[
\begin{array}{c}
A \\
 H_{\beta_i}
\end{array}
\right]
\right) =  \mbox{S-rank}(A),
\end{eqnarray}
But, from Def.~\ref{Odef}, a Type-$\beta_i$ measurement satisfies Eq.~\eqref{EQ9} and 
\begin{eqnarray}
\mbox{rank}\left(\mathcal{O}\left(A,\left[
\begin{array}{c}
H_{\alpha} \\
H_{\beta_i}
\end{array}\right] \right)\right) = \mbox{rank}\left(\mathcal{O}\left(A,
H_{\alpha} \right)\right) + 1.
\end{eqnarray}
\end{defn}

\section{Observational Equivalence} \label{eqv}
In set theory and abstract algebra, an equivalence relation,~`$\sim$', is defined as having three properties: reflexivity, symmetry, and transitivity~\cite{abstractalgebra}. Towards observational equivalence in state estimation, reflexivity implies that every state is equivalent to itself, i.e.~$x_i \sim x_i$, symmetry implies that if~$x_i \sim x_j$ then~$x_j \sim x_i$, and transitivity implies that if~$x_i \sim x_j$ and~$x_j \sim x_m$, then~$x_i \sim x_m$. We now define observational equivalence as follows.
\begin{defn}
Let~$H_i$ denote a row with a non-zero at the~$i$th location denoting a measurement of the $i$th state. Observational equivalence among two states,~$x_i\sim x_j$, is defined as 
\begin{eqnarray*}
\mbox{rank}~\mc{O}(A, H_i) = \mbox{rank}~\mc{O}(A, H_j) = \mbox{rank}~\mc{O}\left(A, \left[
\begin{array}{c}
H_i\\
H_j
\end{array}
\right]
\right).
\end{eqnarray*}
\end{defn}
\noindent It can be easily verified that the above definition obeys the three requirements on transitivity, reflexivity, and symmetry. 

\subsection{Graph-Theoretic}
In order to define observational equivalence using graph-theoretic arguments, we need to introduce the notion of a \textit{contraction}. To this aim, we first define an auxiliary graph,~$\Gamma^\mc{M}_A$, as the graph constructed by reversing all the edges of the maximum matching,~$\mc{M}$, in the bipartite graph~$\Gamma _A$; recall Section~\ref{MPGT}. This auxiliary graph,~$\Gamma^\mc{M}_A$, is used to find the contractions in the system digraph~$\mc{G}_{\scriptsize \mbox{sys}}$ as follows. In the auxiliary graph,~$\Gamma^\mc{M}_A$, assign a contraction,~$\mc{C}_i$, to each unmatched node,~$v_j \in \delta \mc{M}^+$, as the set of all nodes in~$\delta\mc{M}^+$ reachable by alternating paths from~$v_j$. We denote $v_j(\mc{C}_i)$ as the unmatched node within the contraction,~$\mc{C}_i$. An alternating path is a path with every second edge in~$\mc{M}$. Equivalence among Type-$\alpha$ agents is thus defined using contractions and unmatched nodes within each contraction. We have the following result.
\begin{lem}
Given a contraction,~$\mc{C}_i$, the unmatched node,~$v_j(\mc{C}_i)$, within this contraction is not unique,  i.e. for $v_j(\mc{C}_i)$ and $v_g(\mc{C}_i)$, $v_j(\mc{C}_i) \in \delta \mc{M}_1^+$ and $v_g(\mc{C}_i) \in \delta \mc{M}_2^+$, where $\mc{M}_1$ and $\mc{M}_2$ are two choices of maximum matching.
\end{lem}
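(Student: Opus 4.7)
My plan is to produce, from the alternating path witnessing $v_g \in \mc{C}_i$, a second maximum matching $\mc{M}_2$ under which $v_g$ becomes the unmatched representative of $\mc{C}_i$ while $v_j$ is absorbed into the matching. Set $\mc{M}_1 := \mc{M}$, the original maximum matching, so that $v_j(\mc{C}_i) \in \delta\mc{M}_1^+$ by hypothesis. Since $v_g \in \mc{C}_i$, there is a directed path in $\Gamma_A^{\mc{M}}$ from $v_j$ to $v_g$; I pick a simple such path and read it back in the bipartite graph $\Gamma_A$ as $P = (v_j = w_0, w_1, \ldots, w_{2k} = v_g)$.

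The orientation convention of $\Gamma_A^{\mc{M}}$ immediately pins down the structure of $P$: outgoing edges at any $\mc{V}^+$-vertex are precisely the non-matching edges of $\Gamma_A$ (matching edges having been reversed), while outgoing edges at any $\mc{V}^-$-vertex are the reversed matching edges. This forces $P$ to alternate strictly, with $(w_0,w_1)$ outside $\mc{M}_1$ (since $v_j$ is unmatched), $(w_2,w_1)$ inside $\mc{M}_1$, and so on. Both endpoints lie in $\mc{V}^+$, so $P$ has even length $2k$ with $k$ matched and $k$ unmatched edges; in particular, $v_g$ is matched in $\mc{M}_1$ through the edge $(w_{2k},w_{2k-1})$.

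I then take the symmetric difference $\mc{M}_2 := \mc{M}_1 \triangle E(P)$. Simplicity of $P$ ensures $\mc{M}_2$ is a valid matching: each internal vertex of $P$ loses one old matching edge and gains exactly one new one, and no vertex off $P$ is affected. The size is preserved, $|\mc{M}_2| = |\mc{M}_1|$, so $\mc{M}_2$ is still maximum. At the endpoints, $v_j$ acquires the new edge $(w_0,w_1)$ and becomes matched, whereas $v_g$ loses $(w_{2k},w_{2k-1})$ and acquires no replacement, so $v_g \in \delta\mc{M}_2^+$ and $v_j \notin \delta\mc{M}_2^+$, as required. The main place to be careful is checking that the symmetric-difference swap really produces a matching rather than inadvertently creating a degree-two vertex; simplicity of $P$ handles this, but the bookkeeping is worth spelling out explicitly in the final write-up, since the whole conclusion hinges on each traversed vertex being incident to exactly one edge on either side of the swap.
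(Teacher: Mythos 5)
Your proof is correct, and it takes a genuinely different route from the paper's. The paper disposes of this lemma in three sentences by appealing to the non-uniqueness of maximum matchings in the Dulmage--Mendelsohn decomposition (citing Murota) and asserting, without construction, that once the contraction~$\mc{C}_i$ is established there may be several candidates for its unmatched node. You instead unpack the machinery behind that citation: you take the alternating path in~$\Gamma_A^{\mc{M}}$ witnessing $v_g \in \mc{C}_i$, observe that the orientation convention forces it to be an even-length alternating path between two $\mc{V}^+$-vertices ending in a matched edge at~$v_g$, and then form the symmetric difference $\mc{M}_2 = \mc{M}_1 \triangle E(P)$ to exhibit an explicit second maximum matching with $v_g$ unmatched and $v_j$ matched. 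Your bookkeeping on why the swap preserves the matching property (each internal vertex trades exactly one matched edge for one, the endpoints each change status) and why $|\mc{M}_2| = |\mc{M}_1|$ keeps it maximum is exactly the standard Berge-style argument, and it is sound. What your approach buys is a self-contained, constructive proof that does not lean on the cited decomposition; what the paper's approach buys is brevity and a pointer to the structure theory (the DM decomposition also yields the complementary fact, stated at the end of the paper's proof, that every contraction contains \emph{exactly one} unmatched node for any choice of~$\mc{M}$, which your path-flipping argument alone does not establish but which the lemma as stated does not require).
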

\begin{proof}
The proof is a direct result of non-uniqueness of maximum matching from Dulmage-Mendelson decomposition~\cite{murota}. To find a contraction, we start with a particular unmatched node, e.g. in $\delta \mc{M}_1^+$. However, once we establish the contraction, $\mc{C}_i$, within this contraction there may be multiple options for an unmatched node (in another unmatched set $\delta \mc{M}_2^+$). In other words, every contraction includes exactly one unmatched node for any choice of~$\mc{M}$. 
\end{proof}
\begin{lem}
All nodes that belong to the same contraction,~$\mc{C}_i$, are equivalent Type-$\alpha$ measurements. 
\end{lem}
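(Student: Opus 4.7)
The plan is to reduce the claim to the standard Dulmage-Mendelsohn alternating-path swap and then translate the resulting matching flexibility into the rank equalities demanded by observational equivalence. First, fix any two nodes $v_j, v_g \in \mc{C}_i$. By the construction of $\mc{C}_i$ in the auxiliary graph $\Gamma^{\mc{M}}_A$, there is an alternating path from the distinguished unmatched node $v_j(\mc{C}_i)$ to $v_g$. Taking the symmetric difference of $\mc{M}$ with this path yields a new maximum matching $\mc{M}'$ of the same cardinality in which $v_g$ plays the role of the unmatched node. Hence every node of $\mc{C}_i$ is the unmatched node for some maximum matching, which extends the previous lemma from two specific candidates to the entire contraction.

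Second, I would invoke the algebraic characterization in~\eqref{Tal_def}: measuring $v_g$ is Type-$\alpha$ precisely when it raises the $S$-rank of $[A^\top~H_g^\top]^\top$ by one. Since $v_g$ is unmatched in $\mc{M}'$, augmenting the bipartite graph by the row for $H_g$ and pairing $v_g$ with this row enlarges $\mc{M}'$ by a single edge, producing a matching of size $S$-rank$(A)+1$. Because $|\mc{M}'|$ does not depend on the particular choice of $v_g \in \mc{C}_i$, the resulting $S$-rank improvement is the same for every candidate, so every node of $\mc{C}_i$ is a valid Type-$\alpha$ measurement with the same structural effect.

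Third, I would verify the three-way rank equality in the definition of observational equivalence. Let $H_j, H_g$ be the single-row matrices that measure $v_j$ and $v_g$. The previous step, together with Theorem~\ref{woude_thm} and Proposition~\ref{SRprop}, gives $\mbox{rank}\,\mc{O}(A, H_j) = \mbox{rank}\,\mc{O}(A, H_g)$, since each measurement alone lifts the same single matching deficiency and neither alters accessibility. For the joint measurement, both rows cover the same unmatched slot associated with $\mc{C}_i$, so appending $H_g$ on top of $H_j$ is matching-redundant and $\mbox{rank}\,\mc{O}(A,[H_j^\top~H_g^\top]^\top)$ equals the common value, establishing $v_j \sim v_g$.

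The main obstacle is precisely this last redundancy step: rigorously arguing that a contraction accounts for exactly one unit of matching deficit, so that only one Type-$\alpha$ measurement drawn from $\mc{C}_i$ can ever contribute to the $S$-rank at a time. This ultimately reduces to the uniqueness-of-unmatched-node-per-contraction statement in the preceding lemma, but packaging it cleanly at the Gramian-rank level requires invoking generic observability (Theorem~\ref{woude_thm}) to convert matching-level statements into rank statements. The alternating-path swap underlying the first step, by contrast, is routine Dulmage-Mendelsohn bookkeeping.
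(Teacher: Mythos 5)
Your proposal is correct and follows essentially the same route as the paper: the paper's own (two-sentence) proof likewise rests on the non-uniqueness of the unmatched node within a contraction from the preceding lemma, the fact that measuring any unmatched node raises the $S$-rank by one, and the observation that each contraction accounts for exactly one unit of rank deficiency. Your version merely makes explicit the alternating-path symmetric-difference argument and the translation to Gramian rank that the paper leaves implicit.
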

\begin{proof}
The equivalence relation for the Type-$\alpha$ measurements are tied to the available choices of unmatched nodes in the corresponding contraction. Since a measurement of each unmatched node improves the $S$-rank by~$1$ and each contraction contributes to a single rank-deficiency, it is straightforward to note that the three equivalence properties are satisfied.
\end{proof}

The following establishes Type-$\beta$ equivalence.
\begin{lem} 
Two Type-$\beta$ measurements,~$\beta_i$ and $\beta_j$, of states~$x_i$ and $x_j$, are equivalent,~$\beta_i\sim \beta_j$, if they belong to the same parent SCC,~$\mc{S}^{\circlearrowleft p}_i$. An immediate corollary is that all of the states that belong to the same parent SCC are equivalent. 
\end{lem}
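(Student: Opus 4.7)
The plan is to exploit the algebraic interpretation of Type-$\beta$ measurements from Proposition~\ref{ACprop} together with the strong connectivity inside the parent SCC $\mc{S}^{\circlearrowleft p}_i$. Since $\mc{S}^{\circlearrowleft p}_i$ is matched, its states already admit a disjoint cycle cover, and by Eq.~\eqref{EQ9} neither $H_{\beta_i}$ nor $H_{\beta_j}$ alters the S-rank of $[A^\top~H^\top]^\top$. Hence the contribution of either individual Type-$\beta$ measurement to the observability Gramian arises purely from restoring accessibility, and the proof reduces to showing that $y_i$ and $y_j$ render the same set of states $\mc{Y}$-connected.

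First, I would observe that for any state $x_k \in \mc{S}^{\circlearrowleft p}_i$, strong connectivity inside the SCC yields paths $x_k \overset{\scriptsize\mbox{path}}{\longrightarrow} x_i$ and $x_k \overset{\scriptsize\mbox{path}}{\longrightarrow} x_j$, so every state in the parent SCC becomes $\mc{Y}$-connected through either measurement alone. Next, states in child SCCs $\mc{S}^c \preceq \mc{S}^{\circlearrowleft p}_i$ each admit a path into $\mc{S}^{\circlearrowleft p}_i$, which can be extended inside that SCC to reach either $x_i$ or $x_j$. States in SCCs that are not comparable to $\mc{S}^{\circlearrowleft p}_i$ are unaffected by either measurement, precisely because a parent SCC has no outgoing edges to other SCCs. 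Consequently, the set of states admitting a $\mc{Y}$-connected path via $y_i$ alone is identical to the set via $y_j$ alone, and to the joint set via $\{y_i,y_j\}$.

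Invoking Proposition~\ref{ACprop}, this coincidence of accessible state sets translates into the same block-triangular decomposition of $[A^\top~H^\top]^\top$ whether we append $H_{\beta_i}$ or $H_{\beta_j}$. Combined with Eq.~\eqref{EQ9}, which preserves the S-rank, the generic rank of the observability Gramian is determined solely by the accessible portion of the state space, so that
\[
\mbox{rank}\,\mc{O}(A,H_i) \;=\; \mbox{rank}\,\mc{O}(A,H_j) \;=\; \mbox{rank}\,\mc{O}\!\left(A,\left[\begin{array}{c}H_i \\ H_j\end{array}\right]\right).
\]
Reflexivity, symmetry, and transitivity across states in a common parent SCC are then immediate from membership in the same $\mc{S}^{\circlearrowleft p}_i$, and the stated corollary follows at once.

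The main obstacle will be making rigorous the bridge from the graph-theoretic equality ``same $\mc{Y}$-connected set'' to the algebraic equality of Gramian ranks within the generic framework. Specifically, one must argue that, for almost all numerical realizations of the nonzero entries, the rank deficiency of $\mc{O}$ is controlled exactly by the inaccessible component whenever the internal S-rank condition on $\mc{S}^{\circlearrowleft p}_i$ is already satisfied. This step leans on the duality to the generic controllability result underlying Theorem~\ref{woude_thm} and on the PBH-style reasoning recalled in Section~\ref{prelim}, ensuring that equal accessibility structure implies equal generic Gramian rank.
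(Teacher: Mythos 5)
Your argument is correct and takes essentially the same route as the paper's proof, which is a one-line appeal to the strong connectivity of $\mc{S}^{\circlearrowleft p}_i$: measuring any single state of a parent SCC renders accessible exactly the states that can reach that SCC, so the choice of state within it is immaterial. Your version is just a more detailed elaboration of that idea --- notably the useful observation that Eq.~\eqref{EQ9} confines a Type-$\beta$ measurement's contribution to the accessibility condition alone --- and you rightly flag the accessibility-to-generic-Gramian-rank bridge as the step the paper itself leaves to a citation rather than proving.
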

\begin{proof}
The proof follows form the strong connectivity of~$\mc{S}^{\circlearrowleft p}_i$, which implies accessibility of all nodes from a single node in the corresponding parent SCC,~\cite{asilomar11}.  
\end{proof}
\noindent Since the parent SCCs are disjoint components in the system digraph, equivalent sets among Type-$\beta$ measurements are disjoint. Notice that if a parent SCC is \textit{unmatched}, the measurement is of Type-$\alpha$. In such cases, a Type-$\alpha$ measurement recovers both conditions for observability in Theorem~\ref{woude_thm}. 

\subsection{Algebraic}
We now provide the algebraic interpretation of equivalence among the Type-$\alpha$ and Type-$\beta$ measurements. 
\begin{lem}\label{a_eq_lem}
Two Type-$\alpha$ measurements, $\alpha_i$ and $\alpha_j$, are equivalent, $\alpha_i\sim\alpha_j$, if and only if
{\scriptsize\begin{eqnarray} \label{a1}
\mbox{S-rank}\left(
\left[
\begin{array}{c}
 A \\
 H_{\alpha_i}
\end{array}
\right]
\right) =  
\mbox{S-rank}\left(
\left[
\begin{array}{c}
 A \\ 
 H_{\alpha_j}
\end{array}
\right]
\right)
= \mbox{S-rank}\left(
\left[
\begin{array}{c}
A \\
H_{\alpha_i} \\
H_{\alpha_j}
\end{array}
\right]
\right)
\end{eqnarray}}i.e. a collection of equivalent Type-$\alpha$ measurements improves the $S$-rank by exactly $1$.
\end{lem}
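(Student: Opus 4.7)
The plan is to prove the two directions separately, using the definition of Type-$\alpha$ as the hinge between observational equivalence (defined via ranks of observability Gramians) and the $S$-rank condition of Proposition~\ref{SRprop}. The underlying idea is that Type-$\alpha$ measurements act \emph{only} on the matching/$S$-rank condition (ii) of Theorem~\ref{woude_thm} and contribute nothing to accessibility, so their contribution to $\text{rank}\,\mc{O}$ is faithfully tracked by their contribution to the $S$-rank of $[A^\top \ H^\top]^\top$.

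For the forward direction, I would assume $\alpha_i \sim \alpha_j$ and unpack the definition of observational equivalence, which forces
\[
\text{rank}\,\mc{O}(A,H_{\alpha_i}) = \text{rank}\,\mc{O}(A,H_{\alpha_j}) = \text{rank}\,\mc{O}\!\left(A,\bigl[H_{\alpha_i}^\top \ H_{\alpha_j}^\top\bigr]^\top\right).
\]
By the definition of Type-$\alpha$, each measurement individually raises $S$-rank by exactly $1$, and I would argue (via Propositions~\ref{ACprop}--\ref{SRprop} and Theorem~\ref{woude_thm}) that, for measurements of Type-$\alpha$, every unit of $S$-rank gain translates into one additional unit of observability-rank gain. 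Hence, if adding $H_{\alpha_j}$ on top of $H_{\alpha_i}$ does not increase $\text{rank}\,\mc{O}$, it cannot increase $S$-rank either, giving~\eqref{a1}.

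For the reverse direction, I would assume the three $S$-ranks in~\eqref{a1} are equal, so jointly $\alpha_i$ and $\alpha_j$ improve $S$-rank by the same amount, $1$, as each of them alone. Translating this into the bipartite picture of Section~\ref{MPGT}, both $\alpha_i$ and $\alpha_j$ must supply a matching edge for the \emph{same} deficit in the maximum matching; equivalently, by the Dulmage--Mendelsohn decomposition they belong to a common contraction $\mc{C}_k$. The preceding lemma (all nodes in a single contraction are equivalent Type-$\alpha$ measurements) then yields $\alpha_i \sim \alpha_j$, closing the iff.

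The main obstacle I expect is the step that asserts ``$S$-rank gain $=$ observability-rank gain'' for Type-$\alpha$ measurements: this is where one actually has to invoke that Type-$\alpha$ does not interact with the accessibility condition and that the $S$-rank characterization of Proposition~\ref{SRprop} transfers to the rank of the numerical Gramian for almost all parameter choices (the genericity statement from~\cite{woude:03}). Handling this cleanly without re-doing the Dulmage--Mendelsohn bookkeeping is the delicate part; once it is accepted, both implications reduce to comparing rank increments by $1$.
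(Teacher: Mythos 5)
Your route is genuinely different from the paper's. The paper never touches the observability Gramian in its proof of this lemma: it takes \eqref{a1} as the working characterization of $\alpha_i\sim\alpha_j$ and merely verifies the three axioms of an equivalence relation, obtaining transitivity by reading \eqref{Tal_def} and \eqref{a1} as equality of the (structural, row-)spans of $[A^\top\ H_{\alpha_i}^\top]^\top$ and $[A^\top\ H_{\alpha_j}^\top]^\top$ and chaining these equalities; the converse is dispatched with ``similar arguments.'' You instead attack the literal ``if and only if'' between the Gramian-based definition of $\sim$ and the $S$-rank condition \eqref{a1}, which is closer to what the lemma actually claims. Your reverse direction --- equal $S$-ranks force $\alpha_i$ and $\alpha_j$ into the same Dulmage--Mendelsohn contraction, after which the earlier lemma that all nodes of one contraction are equivalent Type-$\alpha$ measurements gives $\alpha_i\sim\alpha_j$ --- is clean and correctly reuses the graph-theoretic half of the paper. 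The price of your route is concentrated in the step you yourself flag: the assertion that, for Type-$\alpha$ measurements, each unit of $S$-rank gain translates into a unit of $\mbox{rank}\,\mc{O}$ gain is not available anywhere in the letter. It requires the generic-rank formula for the observability matrix (the observability dual of Hosoe's controllable-subspace dimension theorem), and it only holds in the form you need once accessibility is accounted for --- one should really compare $\mbox{rank}\,\mc{O}$ with the Type-$\beta$ rows already present, exactly as the paper does when it defines Type-$\beta$ equivalence relative to $H_\alpha$; for bare $H_{\alpha_i}$, $H_{\alpha_j}$ the individual Gramian ranks are dominated by accessibility effects that \eqref{a1} does not see. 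So your plan is workable but imports a nontrivial genericity result the paper does not state, whereas the paper's argument stays entirely at the structural level and, in exchange, never actually connects \eqref{a1} back to the Gramian-based definition of equivalence.
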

\begin{proof} 
Reflexivity and symmetry are directly induced by Eq.~\eqref{a1}. For transitivity, consider three Type-$\alpha$ measurements, $\alpha_i, \alpha_j, \alpha_m$, with~$\alpha_i\sim\alpha_j$ and~$\alpha_j\sim\alpha_m$. From Eqs.~\eqref{Tal_def} and \eqref{a1}, we have $\mbox{span}([A^\top,H_{\alpha_i}^\top]^\top)=\mbox{span}([A^\top, H_{\alpha_j}^\top]^\top)$; similarly,  $\mbox{span}([A^\top,H_{\alpha_j}^\top]^\top)=\mbox{span}([A^\top,H_{\alpha_k}^\top]^\top)$ and transitivity follows. Sufficiency also follows using similar arguments.
\end{proof}
It is noteworthy tht the notion of (row) span in Lemma~\ref{a_eq_lem} is driven by $S$-rank and is to be considered as the maximal span over all possible choices of non-zeros in the corresponding matrix. The following lemma establishes Type-$\beta$ equivalence.

\begin{lem} 
Let~$H_\alpha$ denote the Type-$\alpha$ measurement matrix. Two Type-$\beta$ measurements, $\beta_i$ and $\beta_j$, are equivalent, when
\begin{eqnarray}
\mbox{rank}\left(\mathcal{O}\left(A,\left[
\begin{array}{c}
H_{\alpha} \\
H_{\beta_i}
\end{array}\right] \right)\right) = \mbox{rank}\left(\mathcal{O}\left(A,\left[
\begin{array}{c}
H_{\alpha} \\
H_{\beta_j}
\end{array}\right] \right)\right)  \nonumber \\
= \mbox{rank}\left(\mathcal{O}\left(A,\left[
\begin{array}{c}
H_{\alpha} \\
H_{\beta_i}\\
H_{\beta_j}
\end{array}\right] \right)\right) =
\mbox{rank}\left(\mathcal{O}\left(A,
H_{\alpha} \right)\right) + 1.
\end{eqnarray}
\end{lem}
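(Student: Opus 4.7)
The plan is to verify that the relation introduced by the displayed rank equations satisfies the three axioms of an equivalence relation—reflexivity, symmetry, transitivity—mirroring the argument of Lemma~\ref{a_eq_lem}, but with ordinary matrix rank of the observability Gramian (stacked on $H_\alpha$) replacing the $S$-rank of $[A^\top~H^\top]^\top$. Reflexivity ($\beta_i\sim\beta_i$) is immediate, since replacing $\beta_j$ by $\beta_i$ collapses the third rank expression into the first (duplicate rows do not change rank). Symmetry is built into the statement, because the three displayed ranks are invariant under swapping $i$ and $j$.

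The substantive content lies in transitivity. I would introduce the row space $\mc{R}_0 \triangleq \mbox{row-span}\,\mc{O}(A,H_\alpha)$ and, for each $\beta_k \in \{\beta_i,\beta_j,\beta_m\}$, the enlarged row space
\begin{eqnarray*}
\mc{R}_k \triangleq \mbox{row-span}\,\mc{O}\!\left(A,\left[\begin{array}{c} H_\alpha\\ H_{\beta_k}\end{array}\right]\right).
\end{eqnarray*}
By the definition of a Type-$\beta$ measurement, $\dim\mc{R}_k = \dim\mc{R}_0 + 1$ for every $k$, i.e.\ each $\mc{R}_k$ extends $\mc{R}_0$ by exactly one dimension. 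The hypothesis $\beta_i\sim\beta_j$, combined with the equality of the joint rank to $\dim\mc{R}_0+1$, forces $\mc{R}_i=\mc{R}_j$; the same reasoning applied to $\beta_j\sim\beta_m$ gives $\mc{R}_j=\mc{R}_m$. Transitivity of set equality then yields $\mc{R}_i=\mc{R}_m$, which translates back to
\begin{eqnarray*}
\mbox{rank}\,\mc{O}\!\left(A,\left[\begin{array}{c} H_\alpha\\ H_{\beta_i}\\ H_{\beta_m}\end{array}\right]\right) = \dim\mc{R}_0+1 = \mbox{rank}\,\mc{O}(A,H_\alpha)+1,
\end{eqnarray*}
so $\beta_i\sim\beta_m$, as required.

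The main obstacle, such as it is, is not a calculation but an interpretive one: showing that each Type-$\beta$ measurement contributes a single common ``new'' direction modulo $\mc{R}_0$ among all equivalent measurements. This is encoded precisely in the step that reads the stacked rank equality as $\mc{R}_i=\mc{R}_j$, and it is exactly the algebraic image of the graph-theoretic picture—each matched parent SCC produces one obstruction to observability (one eigenvector of $A$ that escapes $H_\alpha$), and measuring any state in that SCC closes the same $1$-dimensional obstruction. A final sentence can observe that sufficiency (the ``if'' direction, which is what the lemma actually claims) is the forward implication just established; the converse, if desired, follows because any Type-$\beta$ pair violating one of the three equalities either fails the Type-$\beta$ definition itself or recovers two independent directions and thus corresponds to distinct parent SCCs, placing the pair in disjoint equivalence classes as noted after the graph-theoretic Type-$\beta$ lemma.
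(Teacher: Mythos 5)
Your proof is correct, but it takes a genuinely different route from the paper's. The paper disposes of transitivity in one line by appealing to structure: equivalent Type-$\beta$ measurements lie in the same irreducible block of $A$ (equivalently, the same matched parent SCC in the Dulmage--Mendelsohn picture), so the claim is inherited from the block decomposition. You instead give a self-contained linear-algebraic argument: each Type-$\beta$ measurement enlarges the row space $\mc{R}_0$ of $\mc{O}(A,H_\alpha)$ by exactly one dimension, and the stacked rank equality pins all equivalent measurements to the \emph{same} one-dimensional extension, after which transitivity is just transitivity of subspace equality. The one step you leave implicit is that the row span of the stacked Gramian $\mc{O}(A,[H_\alpha^\top\ H_{\beta_i}^\top\ H_{\beta_j}^\top]^\top)$ equals the subspace sum $\mc{R}_i+\mc{R}_j$; once that is said, $\dim(\mc{R}_i+\mc{R}_j)=\dim\mc{R}_i=\dim\mc{R}_j$ forces $\mc{R}_i=\mc{R}_j$, and your argument closes. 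Your approach buys a proof that needs no reference to the Frobenius normal form or to the graph-theoretic lemma, and it makes explicit the geometric content (one common obstruction direction per equivalence class); the paper's approach buys brevity and a direct link between the algebraic and graph-theoretic characterizations, which is the letter's main theme. One caution: since the paper works with \emph{generic} observability, the row spaces you manipulate should be read as generic (maximal over the non-zero entries), in the same spirit as the remark following Lemma~\ref{a_eq_lem}; with that reading your argument stands.
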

\begin{proof} Reflexivity and symmetry are trivial. Transitivity follows from the fact that equivalent Type-$\beta$ measurements belong to the \textit{same irreducible block} of $A$ (see \cite{murota}).   
\end{proof} 

\section{Illustration} \label{ex}
Consider an LTI dynamical system with~$20$ state variables. The system digraph associated with the structure of the system matrix is given in Fig~\ref{fig_graph}.  This digraph has three contractions,~\{\{2,7,9\},\{4,15\},\{10,12\}\}, constituting the equivalent~Type-$\alpha$ sets; and two matched parent SCCs,~$\{\{11,12,13,14\},\{9\}\}$, constituting the equivalent~Type-$\beta$ sets (the SCC,~$\{16,17,18\}$, e.g., has an outgoing edge and hence is not parent). Three unmatched nodes each from a contraction make the~Type-$\alpha$ sets:~$\alpha_1 \in \{2,7,9\}, \alpha_2 \in \{10,12\}, \alpha_3 \in \{4,15\}$. Notice that both~Type-$\beta$ sets share nodes with the Type-$\alpha$ sets. Therefore, at least three measurements, e.g.~$\{4,9,12\}$, are necessary. In the case of not observing a shared~$\alpha$/$\beta$ state, e.g.~$12$, more than three observations are required; for example,~$\{4,9,10,13\}$ is another set of necessary measurements.   
\begin{figure}[!tbph]
\centering
\subfigure{\includegraphics[height=1.3in]{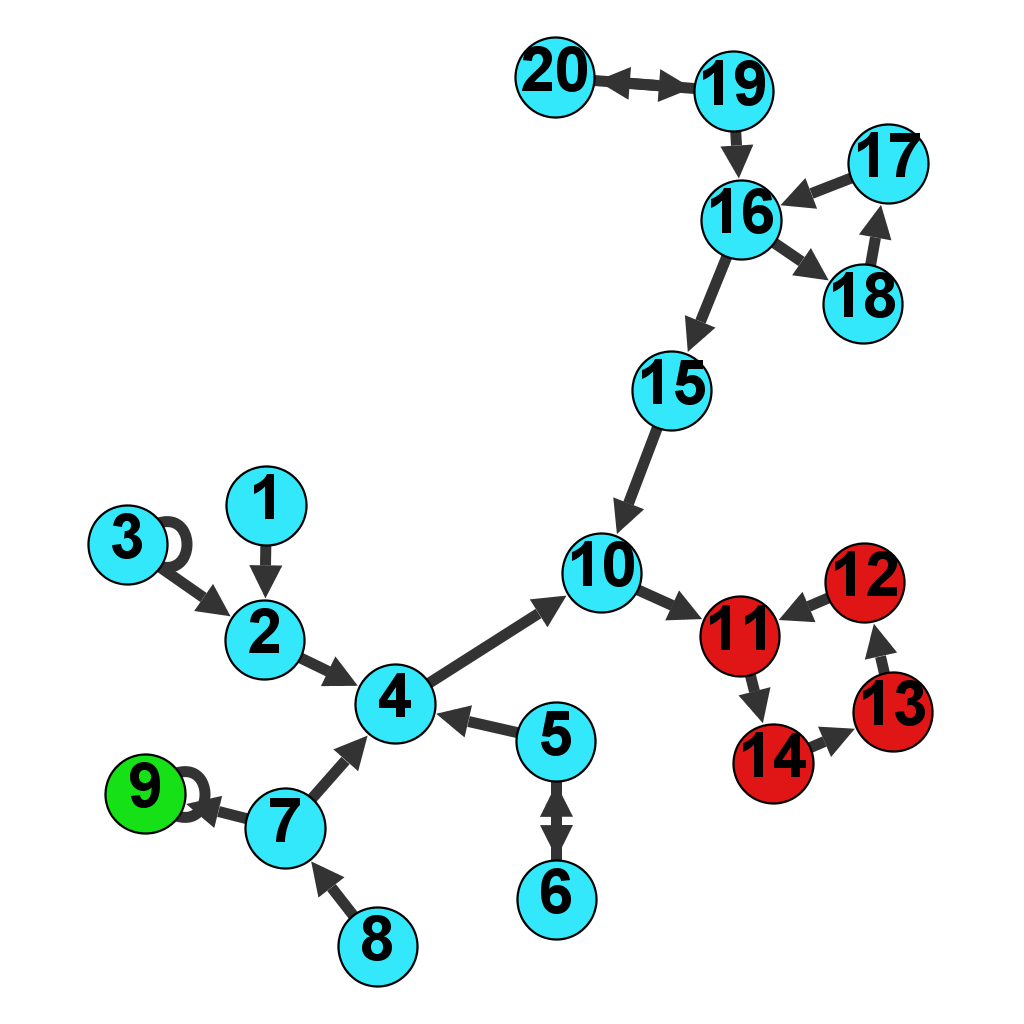}}
\subfigure{\includegraphics[height=1.3in]{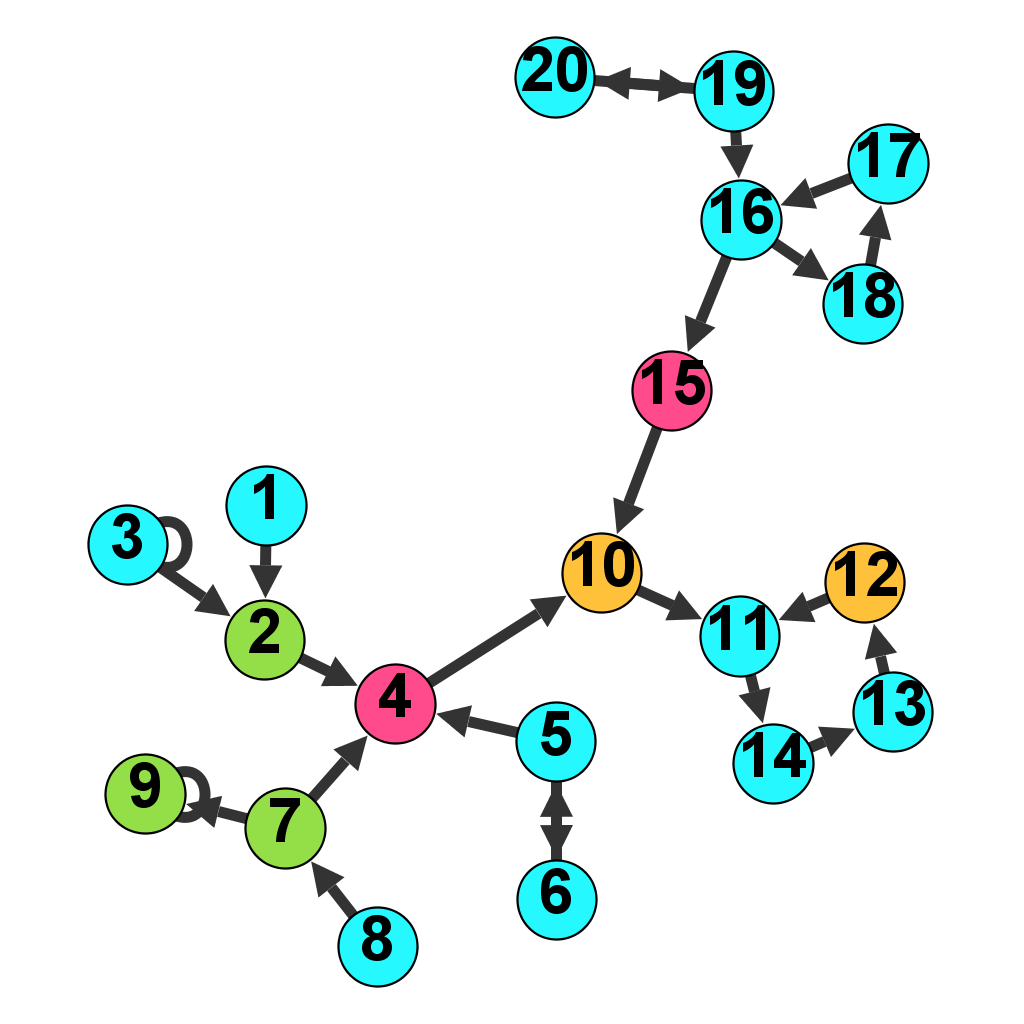}}
\caption{(Left) Type~$\beta$ equivalence sets are shown in red green. (Right) Type-~$\alpha$ equivalent sets are shown in orange, purple, and green.}
\label{fig_graph}
\end{figure}

\section{Conclusion} \label{con}
In this letter, we characterize measurement partitioning and observational equivalence in state estimation. We first derive both graph-theoretic and algebraic representations of two different classes of critical measurements, Type-$\alpha$ and Type-$\beta$, required for observability. This twofold construction of partitions leads to establishing the notion of equivalence among both Type-$\alpha$ and Type-$\beta$ measurements with different graph-theoretic and algebraic interpretations.
Notice that, there are combinatorial algorithms in \textit{polynomial order} to find partial order of SCCs,\cite{tarjan,algorithm}, maximum matching, and contractions in the system digraphs\cite{murota}.

\newpage
\bibliographystyle{IEEEbib}
\bibliography{bibliography}
\end{document}